\newcommand{\bcr}{{\rm bcr}}
\newtheorem{lemma}{Lemma}
\newtheorem{theorem}{Theorem}
\title{A faster fixed parameter algorithm for two-layer crossing minimization}
\date{}
\author{Yasuaki Kobayashi \and Hisao Tamaki}
\begin{document}
\maketitle

\begin{abstract}
We give an algorithm that decides whether the bipartite crossing number
of a given graph is at most $k$.
The running time of the algorithm is upper bounded by $2^{O(k)} + n^{O(1)}$,
where $n$ is the number of vertices of the input graph,
which improves the previously known algorithm due to Kobayashi {\it et al.} (TCS 2014)
that runs in $2^{O(k \log k)} + n^{O(1)}$ time.
This result is based on a combinatorial upper bound on the number of two-layer drawings
of a connected bipartite graph with a bounded crossing number.
\end{abstract}

\section{Introduction}
\label{sec:intro}

A {\em two-layer drawing} of a bipartite graph is a drawing in which
the vertices in one color class are placed on a straight line, the vertices in
the other color class are placed on another straight line parallel to the first line,
and each edge is drawn as a straight line segment.
The two parallel lines are called {\em layers}.
A {\em crossing} in a two-layer drawing is a pair of edges that intersect each other
at a point distinct from their end point.
The problem of finding a two-layer drawing with the minimum number of crossings,
called {\em two-layer crossing minimization} (or simply TLCM),
is dealt with as a combinatorial problem:
the number of crossings in a two-layer drawing is determined by the order
of vertices on each layer.

TLCM is shown to be NP-hard \cite{GJ82}.
It is worth mentioning that the original proof in \cite{GJ82} shows, in fact, the hardness
of TLCM for multigraphs,
however, Schaefer \cite{Sch13} shows that TLCM is NP-hard even for simple graphs.
TLCM can be solved in polynomial time for trees \cite{SSSV01}
and for bipartite permutation graphs \cite{SBS87}.
TLCM has been studied from the view point of parameterized complexity.
In this context, we are asked if there is a two-layer drawing
of a given bipartite graph with at most $k$ crossings,
where $k$ is the parameter for the parameterized problem.
Dujimovi\'c {\it et al.} show that this parameterized
problem is fixed parameter tractable \cite{DFK+08}.
More generally, they give an algorithm that decides whether a given graph
with $n$ vertices has an $h$-layer drawing with at most $k$ crossings
in $2^{O((h + k)^3)}n$ time (see \cite{DFK+08}, for details).
The current and two other authors \cite{KNM+14} improve the running time
for the restricted case $h = 2$, namely TLCM, to $2^{O(k \log k)} + n^{O(1)}$.
Moreover, they, for the first time, show that TLCM admits a polynomial kernelization.
In this paper, we give a faster fixed parameter algorithm for TLCM.

\begin{theorem}
  \label{thm:main}
  There is an algorithm that decides whether a given bipartite graph has
  a two-layer drawing with at most $k$ crossings whose running time is $2^{O(k)} + n^{O(1)}$,
  where $n$ is the number of vertices of the input graph.
\end{theorem}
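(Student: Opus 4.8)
The plan is to combine two ingredients: the polynomial kernelization already available for TLCM and a new combinatorial bound that lets us search the kernel in time single-exponential in $k$. First I would invoke the polynomial kernel of Kobayashi \emph{et al.}~\cite{KNM+14}: in $n^{O(1)}$ time we replace the input by an equivalent instance $G'$ on $\mathrm{poly}(k)$ vertices, so that $\bcr(G)\le k$ if and only if $\bcr(G')\le k$. Since the two running times add, it then suffices to decide $\bcr(G')\le k$ within $2^{O(k)}$ time.

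Next I would reduce to connected graphs. I claim $\bcr$ is additive over connected components: for components $C_1,\dots,C_m$ of $G'$ one has $\bcr(G')=\sum_i \bcr(C_i)$. The inequality $\ge$ holds because restricting any drawing of $G'$ to a single component yields a drawing of that component and crossings between distinct components are non-negative; the inequality $\le$ holds because optimal drawings of the $C_i$ can be concatenated into disjoint intervals along both layers, creating no inter-component crossing. Hence it is enough to compute $\bcr(C_i)$ for each component, or to certify $\bcr(C_i)>k$, and then test whether the sum is at most $k$.

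The engine of the algorithm is the combinatorial bound advertised in the abstract: a connected bipartite graph that admits a two-layer drawing with at most $k$ crossings has only $2^{O(k)}$ such drawings, counted up to the equivalence that ignores the internal order of leaves sharing a common neighbor (such leaves form a bundle whose internal order never affects the crossing count). I would prove this bound by first recalling that a crossing-free connected two-layer drawing is exactly a caterpillar, in which the order on each layer is forced up to bundling and reflection; I would then show that admitting at most $k$ crossings forces the graph to deviate from this caterpillar structure in at most $O(k)$ places, so that a drawing is specified by $O(k)$ bounded local choices together with forced (optimal) placements of the bundles. Enumerating these $2^{O(k)}$ candidate drawings and evaluating the crossing number of each in $\mathrm{poly}(k)$ time yields $\bcr(C_i)$ (or a proof that it exceeds $k$) in $2^{O(k)}$ time per component; summing over the $\mathrm{poly}(k)$ components of $G'$ keeps the total at $2^{O(k)}$, and adding the kernelization cost gives the claimed $2^{O(k)}+n^{O(1)}$ bound.

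The main obstacle is the combinatorial bound itself, and specifically shaving the $\log k$ factor from the previous count. The earlier $2^{O(k\log k)}$ algorithm essentially orders $O(k)$ ``essential'' objects in all $O(k)!=2^{O(k\log k)}$ ways; the difficulty is to prove that the crossing budget makes the overwhelming majority of these orders infeasible, so that only $2^{O(k)}$ combinatorially distinct bounded-crossing drawings survive. I expect this to require a structural encoding of each such drawing---for instance a charging argument that attributes independent binary choices to individual crossings, or a Catalan-type bound on how $O(k)$ caterpillar ``blocks'' may be interleaved---rather than a crude counting of permutations. Getting this bound tight, and making it simultaneously constructive so that the surviving drawings can actually be enumerated, is where the real work lies; the kernelization and the component decomposition are comparatively routine.
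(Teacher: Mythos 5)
The skeleton of your algorithm (kernelize, split into components, enumerate the bounded-crossing drawings of each kernel component) matches the paper, but the engine you rely on --- the claim that a connected bipartite graph has only $2^{O(k)}$ two-layer drawings with at most $k$ crossings, up to reordering sibling leaves --- is both unproved in your write-up and false as stated. Take a path on $n$ vertices: it has no sibling leaves, and starting from its crossing-free drawing you may apply any collection of pairwise non-interacting adjacent transpositions on one layer, each costing a constant number of crossings, producing $\binom{\Omega(n)}{\Omega(k)} = n^{\Omega(k)}$ distinct drawings with at most $k$ crossings. So any correct bound must depend on $n$; the paper proves $2^{O(n+k)}$ (Lemma~\ref{lem:upper_bound}) and then exploits the fact that the kernel of \cite{KNM+14} is connected with $O(k)$ \emph{edges}, hence $O(k)$ vertices, to conclude $2^{O(k)}$. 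Your ``$\mathrm{poly}(k)$ vertices'' is not enough even if the lemma is granted: $2^{O(n+k)}$ with $n = k^2$ is only $2^{O(k^2)}$.

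The deeper issue is that you defer exactly the step that constitutes the paper's contribution. Your proposed route --- at most $k$ crossings forces $O(k)$ deviations from a caterpillar, each a bounded local choice --- still requires \emph{locating} each deviation among $n$ candidate positions, which gives $n^{O(k)} = 2^{O(k \log k)}$ on a kernel with $n = O(k)$, i.e., precisely the bound you are trying to beat. The paper's mechanism is different: it doubles every edge of $G$, takes an Eulerian tour from a root $r \in X(G)$, and uses the tour to build a spanning tree on $X(G)$ whose tree edges are realized by length-two paths in $G$, each edge of $G$ serving at most two such paths. A drawing is then encoded by, for each non-root $x$, the number $g(x)$ of vertices lying between $x$ and its tree-neighbor $T(x)$, plus a sign bit; a crossing-counting argument along the paths $P_x$ shows $\sum_x g(x) \le 4k + n - 1$, so the number of possible gap vectors is a single binomial coefficient of the form $\binom{O(n+k)}{n}= 2^{O(n+k)}$ rather than a choice of $O(k)$ locations out of $n$. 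Without this (or an equivalent) encoding, your proposal does not establish the $2^{O(k)}$ bound, and the rest of the argument --- kernelization and additivity over components, which you handle correctly --- cannot carry the theorem on its own.
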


To establish Theorem~\ref{thm:main}, we analyze the number of two-layer drawings
whose crossing number is at most $k$ and enumerate all such drawings in the claimed running time.
This strategy is inspired by the work of Gutin {\it et al.} \cite{GRS+07}.
They consider a parameterized version of the linear arrangement problem and
give a fixed parameter algorithm for the problem.
To this end, they analyze the number of feasible solutions
of the problem for a spanning tree of the input graph.
However, the approach of using spanning trees does not seem to work for
our problem and we need a slightly more involved technique using Eulerian tours.
See Section~\ref{sec:upper_bound}.

\section{Preliminaries}
\label{sec:pre}
Let $G$ be a bipartite graph with a prescribed bipartition, denoted by $(X(G), Y(G))$, of the vertex set.
We denote by $E(G) \subseteq X(G) \times Y(G)$ the set of edges of $G$.
We call a vertex of degree one a {\em leaf} and the edge incident to a leaf a {\em leaf edge}.

For a set $S$, a {\em layout} on $S$ is a bijection $f$ from $S$ to $\{1, 2, \ldots, |S|\}$.
A {\em two-layer drawing} $D$ of $G$ is defined by
a triple $(G, f_X, f_Y)$, where $f_X$ and $f_Y$
are layouts on $X(G)$ and $Y(G)$, respectively.
A {\em crossing} in $D$ is a pair of edges $(x, y)$ and $(x', y')$
such that both $f_X(x) < f_X(x')$ and $f_Y(y') < f_Y(y)$ hold.
The {\em crossing number} of $D$, denoted by $\bcr(D)$, is
the number of crossings in $D$, that is,
\begin{equation*}
  \sum_{(x, y) \in E(G)} |\{(x', y') \in E(G) : f_X(x) < f_X(x'), f_Y(y') < f_Y(y)\}|.
\end{equation*}
The {\em bipartite crossing number} $\bcr(G)$ of $G$ is the minimum $k$
such that there exists a two-layer drawing $D$ of $G$ with $\bcr(D) = k$.

\section{Combinatorial upper bound}\label{sec:upper_bound}
In this section, fix a connected bipartite graph $G$ and an integer $k$.
We give an upper bound on the number of two-layer drawings of $G$ with at most $k$ crossings.

Let $n = |X(G) \cup Y(G)|$.
One may notice that a trivial upper bound $(n - 1)!$ is essentially tight since
$K_{1, n - 1}$ has $(n - 1)!$ different two-layer drawings without any crossings.
One would, however, also notice that all of those drawings are equivalent in a natural sense.
More generally, it is straightforward to verify that,
for every bipartite graph $G$, there is an optimal two-layer drawing of $G$
in which all the leaves adjacent to $v$ appear consecutively in their layer
for each non-leaf vertex $v$.
Therefore, we may treat such sibling leaves as a single leaf,
weighting the corresponding leaf edge by the number of represented leaves \cite{KNM+14}.
We call a pair of leaves a {\em sibling pair}, if they have a common neighbor.

\begin{lemma}
  \label{lem:upper_bound}
  Suppose $G$ has no sibling pairs.
  Then, the number of two-layer drawings of $G$ with at most $k$ crossings
  is $2^{O(n + k)}$.
\end{lemma}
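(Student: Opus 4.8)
The plan is to bound the number of drawings by a two-stage counting argument: first bound the number of essentially different combinatorial structures that a bounded-crossing drawing can induce on a fixed spanning substructure of $G$, and then bound the number of ways to extend such a structure to the whole graph. Since $G$ is connected, I would like to traverse it by a single walk so that the drawing is reconstructed incrementally. As the authors hint, a spanning tree is not quite enough, so I would instead fix an Eulerian-type tour. Concretely, I would double every edge of $G$ to obtain a connected graph in which every vertex has even degree, take an Eulerian tour $W$ of this doubled graph, and use $W$ as the backbone along which the vertices are visited. The tour has length $2|E(G)|$, and because the drawing has at most $k$ crossings I expect $|E(G)|$ itself to be controlled: a connected bipartite graph on $n$ vertices has at least $n-1$ edges, and any drawing of a graph with $n - 1 + m$ edges forces at least (roughly) $m$ crossings once there are no sibling pairs, so $|E(G)| = n - 1 + O(k)$. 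Establishing this ``few extra edges'' bound is the first key step.

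Given the tour $W = v_0, e_1, v_1, e_2, \dots, e_{2|E|}, v_{2|E|} = v_0$, I would encode a drawing $D$ by recording, as the tour is traversed, where each newly-relevant vertex sits relative to the vertices already placed on its layer. The crucial idea is that two consecutive vertices $v_{i-1}, v_i$ along the tour are joined by an edge of $G$, and in a drawing with few crossings an edge cannot ``jump over'' many vertices without creating crossings. I would make this precise by charging crossings to the displacement: if, along the edge $e_i = (x,y)$, the endpoint $y$ is placed so that $t$ vertices of $Y(G)$ lie strictly between $y$ and the position dictated by the local order of the tour, then this edge participates in a number of crossings that grows with $t$ (each intervening vertex of degree $\geq 1$ contributes an incident edge that must cross $e_i$, using the no-sibling-pair hypothesis to guarantee that intervening leaves still carry crossing-inducing edges). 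Summing over all edges of the tour, the total displacement is $O(k)$, and hence the sequence of per-step displacements is an integer sequence of length $2|E| = O(n+k)$ whose absolute values sum to $O(k)$.

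The final step is a purely combinatorial count. A drawing $D$ is determined by the data assembled along the tour, and I would argue that $D$ is recoverable from (i) the underlying graph $G$ and tour $W$, which are fixed, and (ii) the sequence of signed displacements described above. The number of integer sequences of length $L = O(n+k)$ whose entries have absolute values summing to at most $S = O(k)$ is at most $\sum_{j \le S} 2^j \binom{L + j}{j}$, a standard stars-and-bars estimate, which is bounded by $2^{O(n+k)}$ when $S = O(k)$ and $L = O(n+k)$. Multiplying by the $2^{O(k)}$ choices of signs and by a $2^{O(n)}$ factor accounting for the finitely many local reconstruction ambiguities at each vertex (the direction in which the Eulerian tour is resolved and the handling of the two copies of each edge) still yields $2^{O(n+k)}$ total drawings, which is the claimed bound.

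The main obstacle I anticipate is the charging argument in the second step: bounding the total ``jump'' of the Eulerian tour by $O(k)$ requires showing that large local displacements force proportionally many crossings, and doing so honestly means handling vertices of high degree and overlapping contributions so that crossings are not double-counted across different tour edges. The no-sibling-pair hypothesis is exactly what prevents the degenerate situation in which many leaves could be freely permuted without cost, so the care lies in showing that every intervening vertex genuinely contributes a fresh, uncharged crossing to the budget.
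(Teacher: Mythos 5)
Your proposal follows essentially the same route as the paper: double the edges, take an Eulerian tour of the resulting multigraph, encode the drawing by signed gaps between consecutive same-layer vertices along the tour, charge each gap (minus one for the single possible leaf, which is exactly where the no-sibling-pair hypothesis enters) to crossings with the two tour edges forming the connecting length-two path, and finish with a stars-and-bars count of $2^{O(n+k)}$. The only substantive difference is that you record a gap at every step of the tour, which obliges you to first prove $|E(G)| \le n - 1 + k$ so that the encoding has length $O(n+k)$ (this is true, since deleting one edge per crossing leaves a crossing-free two-layer drawing, i.e.\ a caterpillar forest with at most $n-1$ edges), whereas the paper records only one gap per vertex $x \ne r$ --- to the $X$-vertex visited immediately after the \emph{last} visit of $x$ --- which yields a spanning tree, makes the number of recorded gaps exactly $n-2$, and so sidesteps any bound on $|E(G)|$.
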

\begin{proof}
  The lemma is trivial when $|X(G)| = 1$ or $|Y(G)| = 1$. Hence we assume otherwise.
  Choose $r \in X(G)$ arbitrarily and define a function $T: X(G) \setminus \{r\} \rightarrow X(G)$ that
  satisfies the following conditions:
  \begin{enumerate}
  \item for each $x \in X(G) \setminus \{r\}$, there is a path $P_x$ of length
    two between $x$ and $T(x)$ in $G$,
  \item for each $e \in E(G)$, $|\{x \in X(G) \setminus \{r\} : e \in E(P_x)\}| \le 2$, and
  \item $(X(G), \{\{x, T(x)\} : x \in X(G) \setminus \{r\}\})$ forms a tree.
  \end{enumerate}
  Such a function $T$ is defined as follows.
  Let $H$ be a bipartite multigraph obtained from $G$ by
  replacing each edge by two parallel edges.
  Since every vertex in $H$ has even degree and $H$ is connected, $H$ has an Eulerian tour.
  Then, we fix an Eulerian tour starting at $r$.
  For $x \in X(G) \setminus \{r\}$, we let $T(x)$
  be the vertex in $X(G)$ that is visited immediately after the last visiting of $x$ in the tour.
  It is straightforward to verify that the function $T$ satisfies condition 1 and 2.
  Since $(X(G), \{\{x, T(x)\} : x \in X(G) \setminus \{r\}\})$ has $|X(G)| - 1$ edges
  and, by the construction of $T$, has no cycles, condition 3 holds.
  
  Fix a two-layer drawing $D = (G, f_X, f_Y)$.
  For $x \in X(G) \setminus \{r\}$, we define $g(x) = |f_X(x) - f_X(T(x))| - 1$.
  In other words, $g(x)$ is the number of vertices of $X(G)$ that lie between $x$ and $T(x)$ in $D$.
  Since $G$ has no sibling pairs, there is at most one leaf in $X(G)$ adjacent to
  the unique vertex in $V(P_x) \cap Y(G)$.
  Observe that each edge incident to a vertex counted by $g(x)$ except
  for such a leaf (if it exists) makes a crossing with an edge of $P_x$.
  Considering double counts and the fact that each edge belongs
  to at most two paths $P_x$, $x \in X(G)$, we have
  \begin{equation*}
    \frac{1}{4}\sum_{x \in X(G)\setminus \{r\}} (g(x) - 1) \le k.
  \end{equation*}
  Therefore, the number of possible functions $g$, which is non-negative, has domain size
  at most $n - 1$, and has total value of at most $4k + n - 1$, is at most
  \begin{equation*}
    \left(
    \begin{array}{c}
      4k+2n-3\\
      4k+n-1
    \end{array}
    \right) \le 2^{4k+2n-3}.
  \end{equation*}
  By condition 3, $f_X$ is determined by the values of $g(x)$ and the signs of $|f_X(x) - f_X(T(x))|$ for each $x \in X(G) \setminus \{r\}$
  and the value of $f_X(r)$.
  Hence the number of possible layouts $f_X$ is bounded by $2^{4k+2n-3} \cdot 2^{n - 1} \cdot n$.
  Applying the same argument to $Y(G)$ proves the lemma.
\end{proof}

The above proof immediately gives an algorithm that enumerates
all two-layer drawings of $G$ with at most $k$ crossings in $2^{O(n + k)}$ time
when $G$ is connected and has no sibling pairs.

\section{FPT algorithm}
\label{sec:alg}
In this section, we will design a fixed parameter algorithm for TLCM
using the enumeration algorithm in the previous section.
Our fixed parameter algorithm uses a kernelization result due to \cite{KNM+14}.
First, we define an weighted version of TLCM as follows.
Consider a two-layer drawing of an edge weighted bipartite graph.
We assume that, in this paper, each edge has weight at least one.
The {\em weight} of a crossing is the product of the weights of the crossing edges.
The crossing number of the drawing
is the sum of the weights of all crossings in the drawing.
The bipartite crossing number of an edge weighted bipartite graph is defined analogously.
A {\em leaf edge weighted graph} is an edge weighted graph, where each non-leaf edge has weight exactly one.
The following result is needed for our fixed parameter algorithm.

\begin{theorem}[\cite{KNM+14}]
  \label{thm:kernel}
  There is a polynomial time algorithm that,
  given a connected bipartite graph $G$ and an integer $k$,
  either computes a leaf edge weighted connected bipartite graph $H$ such that
  $H$ has no sibling pairs, $|E(H)| = O(k)$, and $\bcr(G) = \bcr(H)$ or
  answers $\bcr(G) > k$.
\end{theorem}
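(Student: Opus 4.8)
The plan is to exhibit a small collection of polynomial-time reduction rules, each of which either certifies $\bcr(G) > k$ or replaces the current instance by a strictly smaller one with the same bipartite crossing number, and then to prove that any instance to which no rule applies and for which $\bcr \le k$ already has $O(k)$ edges. Every rule will be designed to preserve connectivity, bipartiteness, and the invariant that weights exceeding one occur only on leaf edges, so that the output is a genuine leaf edge weighted connected bipartite graph $H$ with $\bcr(G) = \bcr(H)$.

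The first rule handles sibling pairs: whenever two leaves share a neighbour $v$, I merge them into one leaf and add their weights on the resulting leaf edge. Because some optimal drawing places all leaves of $v$ consecutively on their layer (as observed in Section~\ref{sec:upper_bound}), this merge changes neither the drawing's feasibility nor its crossing count, so $\bcr$ is preserved; exhaustive application leaves no sibling pairs and ensures that every vertex has at most one leaf neighbour.

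For the size bound I would first establish the lower bound $\bcr(G) \ge |E(G)| - |V(G)| + 1$. Given any drawing $D$, choosing one edge from each crossing pair yields at most $\bcr(D)$ edges whose removal makes the drawing crossing-free; a crossing-free two-layer drawing is a forest of caterpillars, hence a forest, so the surviving edges number at most $|V(G)| - 1$, which gives the claim. Consequently, if the cyclomatic number $\mu = |E(G)| - |V(G)| + 1$ exceeds $k$ we may immediately report $\bcr(G) > k$; otherwise $\mu \le k$, so the cyclic part of $G$ is limited to at most $k$ nontrivial blocks. The remaining potentially unbounded structure is ``tree-like'': long induced paths of degree-two vertices and long caterpillar segments, both of which can be routed monotonically and therefore contribute nothing to the crossing number. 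The second and third rules contract any such segment that is longer than a threshold linear in $k$ down to a short canonical segment of the same bipartite parity.

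The technical heart, and the step I expect to be the main obstacle, is proving that these contractions preserve $\bcr$ exactly rather than merely up to the threshold $k$. The idea is an exchange argument: in some optimal drawing the internal vertices of such a segment can be assumed to appear without ``detours,'' so the segment behaves, apart from the crossings forced at its two attachment points, like a single weighted connector, and shortening it neither creates nor destroys crossings. Once exact preservation is in hand, a block-tree counting argument finishes the proof: the at most $k$ nontrivial blocks have $O(k)$ total size once their own paths are contracted, the tree of blocks and bridges has $O(k)$ branch vertices, each contracted segment has constant length, and every vertex carries at most one leaf, so altogether $|E(H)| = O(k)$. Since each rule runs in polynomial time and strictly decreases the instance, only polynomially many applications occur and the whole procedure is polynomial.
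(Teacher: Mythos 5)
First, a remark on scope: this paper does not prove Theorem~\ref{thm:kernel} at all; it is imported verbatim from \cite{KNM+14}, so there is no in-paper proof to compare yours against. Judged on its own merits, your outline has the right overall shape (safe reduction rules plus a lower bound certifying that irreducible yes-instances are small), and two of its ingredients are sound: merging sibling leaves into a weighted leaf edge, and the bound $\bcr(G) \ge |E(G)| - |V(G)| + 1$ via deleting one edge per crossing and using the fact that two-layer planar graphs are forests of caterpillars.

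The step you flagged as the main obstacle is, however, not merely hard but false as stated. Contracting a long induced path of degree-two vertices does \emph{not} preserve the bipartite crossing number when that path lies on a cycle: already $\bcr(C_4) = 1$ while $\bcr(C_6) = 2$ (a short case check over the $3!$ orderings of one side, the other side being fixable by symmetry, shows no drawing of $C_6$ achieves one crossing), and more generally $\bcr(C_{2n})$ grows linearly in $n$. This breaks your argument in two places at once. First, the contraction rule itself is unsound, so the exchange argument you hope for cannot exist. Second, the cyclomatic lower bound is far too weak to rescue it: $C_{2n}$ has $|E| - |V| + 1 = 1$, so on input $(C_{2n}, k)$ with, say, $k = 5$ and $n = 100$, your algorithm would not reject, would contract the cycle to constant size, and would wrongly conclude $\bcr \le 5$ when the true value is $99$. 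Bounding the \emph{number} of nontrivial blocks by $k$ does not bound their \emph{size}, and a single block can be a long cycle consisting entirely of degree-two vertices. What is genuinely needed -- and what constitutes the real content of the kernelization in \cite{KNM+14} -- is a lower bound that charges $\Omega(1)$ crossings to each degree-two vertex lying on a cycle (more generally, to the non-caterpillar structure inside each block), so that contractions are applied only to bridge-like caterpillar segments while everything cyclic is certified to have total size $O(k)$ outright. Without such a lemma, the final counting argument ("the at most $k$ nontrivial blocks have $O(k)$ total size once their own paths are contracted") has no support.
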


Given a bipartite graph and an integer $k$,
we apply the algorithm in Theorem~\ref{thm:kernel} to each connected component of $G$.
If one of the applications answers the bipartite crossing number of
a connected component is larger than $k$, we obviously conclude that $\bcr(G) > k$.
For each output, by Lemma~\ref{lem:upper_bound},
the number of two-layer drawings with at most $k$ crossings is bounded by $2^{O(k)}$.
The bipartite crossing number of each output can be computed in $2^{O(k)}$ time
by using the enumeration algorithm in the previous section
and hence Theorem~\ref{thm:main} holds.

\section{Conclusion}
In this paper, we have developed a faster FPT algorithm for TLCM.
This improves the previous running time $2^{O(k \log k)} + n^{O(1)}$ \cite{KNM+14}
to $2^{O(k)} + n^{O(1)}$.
It would be interesting to know the existence of a subexponential time
FPT algorithm, namely $2^{o(k)}n^{O(1)}$ time algorithm.
The reductions of Garey and Johnson \cite{GJ82} and Schaefer \cite{Sch13}
combined with Exponential Time Hypothesis \cite{IP01}
do not seem to imply a reasonable complexity lower bound for this question.

\end{document}